\date{\today}
\newtheorem{lemma}{Lemma}
\newtheorem{corollary}{Corollary}
\renewcommand{\O}{\mathcal O}
\newcommand{\bfsc}{BFSCut}
\newcommand{\preproc}{PreProcessing}
\newcommand{\bfsctt}{\texttt{\upshape \bfsc}}
\newcommand{\deltabfs}{{\sc Olh}}
\newcommand{\sampl}{{\sc Ocl}}
\newcommand{\newalg}{{\sc Bcm}}
\newcommand{\mvis}{m_{\text{vis}}}
\newcommand{\mtot}{m_{\text{tot}}}
\newcommand{\degr}[1]{\mathrm{deg}(#1)}
\newcommand{\odegr}[1]{\mathrm{outdeg}(#1)}
\newcommand{\Reac}[1]{R(#1)}
\newcommand{\reac}[1]{r(#1)}
\newcommand{\neig}[2]{\Gamma_{#1}(#2)}
\newcommand{\sneig}[2]{\gamma_{#1}(#2)}
\newcommand{\uneig}[2]{\tilde{\gamma}_{#1}(#2)}
\newcommand{\ball}[2]{N_{#1}(#2)}
\newcommand{\sball}[2]{n_{#1}(#2)}
\newcommand{\farn}[1]{f(#1)}
\newcommand{\dfarn}[2]{f_{#1}(#2)}
\newcommand{\lfarn}[3]{\tilde{f}_{#1}(#2,#3)}
\newcommand{\clos}[1]{c(#1)}
\newcommand{\uclos}[2]{\tilde{c}_{#1}(#2)}
\begin{document}

\title{Fast and Simple Computation \\ of Top-k Closeness Centralities}
\numberofauthors{3}
\author{\inst{1}, Pierluigi Crescenzi\inst{2}, Andrea Marino\inst{3}}
\numberofauthors{2}

\author{
Michele Borassi\\
       \affaddr{IMT Institute for Advanced Studies Lucca, Italy}\\
        Pierluigi Crescenzi\\
       \affaddr{Dipartimento di Ingegneria dell'Informazione, Universit\`a di Firenze}\\
 Andrea Marino\\
       \affaddr{Dipartimento di Informatica, Universit\`a di Pisa}\\
}

\maketitle

\begin{abstract}
Closeness is an important centrality measure widely used in the analysis of real-world complex networks. In particular, the problem of selecting the $k$ most central nodes with respect to this measure has been deeply analyzed in the last decade. However, even for not very large networks, this problem is computationally intractable in practice: indeed, Abboud et al have recently shown that its complexity is strictly related to the complexity of the All-Pairs Shortest Path (in short, APSP) problem, for which no subcubic ``combinatorial'' algorithm is known. In this paper, we propose a new algorithm for selecting the $k$ most closeness central nodes in a graph. In practice, this algorithm significantly improves over the APSP approach, even though its worst-case time complexity is the same. For example, the algorithm is able to compute the top $k$ nodes in few dozens of seconds even when applied to real-world networks with millions of nodes and edges. We will also experimentally prove that our algorithm drastically outperforms the most recently designed algorithm, proposed by Olsen et al. Finally, we apply the new algorithm to the computation of the most central actors in the IMDB collaboration network, where two actors are linked if they played together in a movie. 
\end{abstract}

\category{G.2.2}{Discrete Mathematics}{Graph Theory}[graph algorithms]
\category{H.2.8}{Database Management}{Database Applications}[data mining]
\category{I.1.2}{Computing Methodologies}{Algorithms}[analysis of algorithms]

\terms{Design, Algorithms, Performance}


\pagestyle{plain}
\sloppy

\begin{table*}
\begin{center}
\caption{notations used throughout the paper.}
\label{tbl:notation}
\begin{tabular}{||p{4cm}|p{2cm}|p{10cm}||}
\hline
\textbf{Name} & \textbf{Symbol} & \textbf{Definition}\\
\hline\hline
Graphs & $G=(V,E)$ & Graph with node/vertex set $V$ and edge/arc set $E$\\
\cline{2-3}
& $n$ & $|V|$\\
\cline{2-3}
& $\mathcal{G=(V,E)}$ & Weighted DAG of strongly connected components (see Section~\ref{sec:defs})\\
\hline
Degree functions & $\degr v$ & Degree of a node in an undirected graph\\
\cline{2-3}
& $\odegr v$ & Out-degree of a node in a directed graph\\
\hline
Distance function & $d(v,w)$ & Number of edges in a shortest path from $v$ to $w$\\
\hline
Reachability set function & $\Reac v$ & Set of nodes reachable from $v$ (by definition, $v\in \Reac v$)\\
\cline{2-3}
& $\reac v$ & $|\Reac v|$\\
\cline{2-3}
& $\alpha(v)$ & Lower bound on $\reac v$, that is, $\alpha(v) \leq \reac v$ (see Section~\ref{sec:alphaomega})\\
\cline{2-3}
& $\omega(v)$ & Upper bound on $\reac v$, that is, $\reac v \leq \omega(v)$ (see Section~\ref{sec:alphaomega}) \\
\hline
Neighborhood functions & $\neig dv$ & Set of nodes at distance $d$ from $v$, that is, $\{w \in V: d(v,w)=d\}$\\
\cline{2-3}
& $\sneig dv$ & Number of nodes at distance $d$ from $v$, that is, $|\neig dv|$\\
\cline{2-3}
& $\uneig {d+1}v$ & Upper bound on $\sneig {d+1}v$, defined as $\sum_{u \in \neig dv} \degr u$ if the graph is undirected, $\sum_{u \in \neig dv}\odegr u$ otherwise (clearly, $\sneig {d+1}v \leq \uneig {d+1}v$)\\
\hline
Ball functions & $\ball dv$ & Set of nodes at distance \textit{at most} $d$ from $v$, that is, $\{w \in V: d(v,w)\leq d\}$\\
\cline{2-3}
& $\sball dv$ & Number of nodes at distance \textit{at most} $d$ from $v$, that is, $|\ball dv|$\\
\hline
Farness functions & $\farn v$ & Farness of node $v$, that is, $\sum_{w \in \Reac v} d(v,w)$\\
\cline{2-3}
& $\dfarn dv$ & Farness of node $v$ up to distance $d$, that is, $\sum_{w \in \ball dv} d(v,w)$\\
\cline{2-3}
& $\lfarn dvx$ & Lower bound function on farness of node $v$, that is, $\dfarn dv-\uneig{d+1}v+(d+2)(x-\sball dv)$ (see Lemma~\ref{lem:boundokconn})\\
\hline
Closeness functions & $\clos v$ & Closeness of node $v$, tha is, $\frac{(\reac v - 1)^2}{(n-1)\farn v}$\\
\cline{2-3}
& $\uclos dv$ & Upper bound on closeness of node $v$, that is, $\frac{(\reac v - 1)^2}{(n-1)\lfarn dv{\reac v}}$ (see Corollary~\ref{cor:ubound})\\
\hline
\end{tabular}

\end{center}
\end{table*}

\section{Introduction}
The problem of identifying the most central nodes in a network is a fundamental question that has been asked many times in a plethora of research areas, such as biology, computer science, sociology, and psychology. Because of the importance of this question, several centrality measures have been introduced in the literature (for a recent survey, see~\cite{Boldi2014}). Closeness is certainly one of the oldest and of the most widely used among these measures~\cite{Bavelas1950}. Informally, the closeness of a node $v$ is the inverse of the expected distance between $v$ and a random node $w$, and it somehow estimates the efficiency of a node in spreading information to other nodes: the larger the closeness, the most ``influential'' the node is.

Formally, given a (directed strongly) connected graph $G=(V,E)$, the \textit{closeness} of a vertex $v$ is defined as $\clos v=\frac{n-1}{\farn v}$, where $n=|V|$, $\farn v = \sum_{w \in V} d(v,w)$ is the \textit{farness} of $v$, and $d(v,w)$ is the distance between the two vertices $v$ and $w$ (that is, the number of edges in a shortest path from $v$ to $w$). If $G$ is not (strongly) connected, the definition is more complicated, because $d(v,w)$ is defined only for vertices reachable from $v$. Let $\Reac v$ be the set of these vertices, and let $\reac v$ denote its cardinality (note that $v \in \Reac v$ by definition). In the literature \cite{Lin1976,Wasserman1994,Olsen2014}, the most common generalization is
\[\clos v =\frac{\reac v - 1}{\farn v}\frac{\reac v-1}{n-1}=\frac{(\reac v - 1)^2}{(n-1)\farn v}\]
where $\farn v=\sum_{w \in \Reac v} d(v,w)$.

In order to compute the $k$ vertices with largest closeness, the textbook algorithm computes $\clos v$ for each $v$ and returns the $k$ largest found values. The main bottleneck of this approach is the computation of $d(v,w)$, for each pair of vertices $v$ and $w$ (that is, solving the All-Pairs Shortest Paths or APSP problem). This can be done in two ways: either by using fast matrix multiplication, in time $\O(n^{2.373})$ \cite{Williams2012}, or by performing a breadth-first search (in short, BFS) from each vertex $v \in V$, in time $\O(mn)$, where $m=|E|$. Usually, the BFS approach is preferred because the other approach contains big constants hidden in the $\O$ notation, and because real-world networks are usually sparse, that is, $m$ is not much bigger than $n$. However, also this approach is too time-consuming if the input graph is very big (with millions of nodes and hundreds of millions of edges). Our algorithm heavily improves the BFS approach by ``cutting the BFSes'', through an efficient pruning procedure, which allows us to either compute the closeness of a node $v$ or stop the corresponding BFS as soon as we are sure that the closeness of $v$ is not among the $k$ largest ones. The worst-case time complexity of this algorithm is the same as the one of the textbook algorithm: indeed, under reasonable complexity assumptions, this worst-case bound cannot be improved. In particular, it was proved in \cite{Abboud2015} that the complexity of finding the most central vertex is at least the complexity of the APSP: faster combinatorial algorithms for this latter problem would imply unexpected breakthrough in complexity theory \cite{Williams2010}. However, in practice, our algorithm heavily improves the APSP approach, as shown by our experiments. Finally, we will apply our algorithm to an interesting and prototypical case study, that is, the IMDB actor collaboration network: in a little more than half an hour, we have been able to identify the evolution of the set of the 10 most central actors, by analysing snapshots of this network taken every 5 years, starting from 1940 and up to 2014.

\subsection{Related Work}

Closeness is a ``traditional'' definition of centrality, and consequently it was not ``designed with scalability in mind'', as stated in \cite{Kang2011}. Also in \cite{Chen2012}, it is said that closeness centrality can ``identify influential nodes'', but it is ``incapable to be applied in large-scale networks due to the computational complexity''. The simplest solution considered was to define different measures, that might be related to closeness centrality \cite{Kang2011}.

A different line of research has tried to develop more efficient algorithms for this problem. As already said, due to theoretical bounds \cite{Borassi2014b,Abboud2015}, the worst-case complexity cannot be improved in general, unless widely believed conjectures prove to be false. Nevertheless, it is possible to design approximation algorithms: the simplest approach samples the distance between a node $v$ and $l$ other nodes $w$, and return the average of all values $d(v,w)$ found \cite{Eppstein2004}. The time-complexity is $\O(lm)$ to approximate the centrality of all nodes. More refined approximation algorithms are provided in \cite{Cohen2014c,Cohen2014d}, based on the concept of All-Distance Sketch, that is, a procedure that computes the distances between $v$ and $\O(\log n)$ other nodes, chosen in order to provide good estimates of the closeness centrality of $v$. Even if these approximation algorithms work quite well, they are not suited to the ranking of nodes, because the difference between the closeness centrality of two nodes might be very small. Nevertheless, they were used in \cite{Okamoto2008}, where the sampling technique developed in \cite{Eppstein2004} was used to actually compute the top $k$ vertices: the result is not exact, but it is exact with high probability. The authors proved that the time-complexity of their algorithm is $\O(mn^{\frac{2}{3}}\log n)$, under the rather strong assumption that closeness centralities are uniformly distributed between $0$ and $D$, where $D$ is the maximum distance between two nodes (in the worst case, the time-complexity of this algorithm is $\O(mn)$).

Other approaches have tried to develop incremental algorithms, that might be more suited to real-world networks analyses. For instance, in \cite{Lim2011}, the authors develop heuristics to determine the $k$ most central vertices in a varying environment. A different work addressed the problem of updating centralities after edge insertion or deletion \cite{Saryuce2013}: for instance, it is shown that it is possible to update the closeness centrality of $1.2$ million authors in the DBLP-coauthorship network $460$ times faster than recomputing it from scratch.

Finally, some works have tried to exploit properties of real-world networks in order to find more efficient algorithms. In \cite{LeMerrer2014}, the authors develop a heuristic to compute the $k$ most central vertices according to different measures. The basic idea is to identify central nodes according to a simple centrality measure (for instance, degree of nodes), and then to inspect a small set of central nodes according to this measure, hoping it will contain the top $k$ vertices according to the ``complex'' measure. Another approach \cite{Olsen2014} tried to exploit the properties of real-world networks in order to develop exact algorithms with worst-case complexity $\O(mn)$, but performing much better in practice. As far as we know, this is the only exact algorithm that is able to efficiently compute the $k$ most central vertices in networks with up to $1$ million nodes.

However, despite this huge amount of research, the major graph libraries still implement the textbook algorithm: among them, Boost Graph Library \cite{Hagberg2008}, Sagemath \cite{Csardi2006}, igraph \cite{Stein2005}, and NetworkX \cite{Siek2001}. This is due to the fact that the only efficient algorithm published until now for top $k$ closeness centrality is \cite{Olsen2014}, which was published only 1 year ago, and it is quite complicated, because it is based on several other algorithms.

\subsection{Our Results}

As we said before, in this paper we present a new and very simple algorithm for the exact computation of top $k$ closeness central vertices. We show that we drastically improve both the probabilistic approach \cite{Okamoto2008}, and the best algorithm available until now \cite{Olsen2014}. We have computed for the first time the $10$ most central nodes in networks with millions of nodes and hundreds of millions of edges, in very little time. A significant example is the IMDB actor network ($1\,797\,446$ nodes and $72\,880\,156$ edges): we have computed the $10$ most central actors in less than 10 minutes. Moreover, in our DBLP co-authorship network (which should be quite similar to the network used in \cite{Saryuce2013}), our performance is more than $6\,000$ times better than the performance of the textbook algorithm: if only the most central node is needed, we can recompute it from scratch more than $10$ times faster than performing their update. Finally, our approach is not only very efficient, but it is also very easy to code, making it a very good candidate to be implemented in existing graph libraries.

\subsection{Preliminary Definitions}\label{sec:defs}

We assume the reader to be familiar with the basic notions of graph theory (see, for example,~\cite{Cormen2009}): all the notations and definitions used throughout this paper are summarised in \Cref{tbl:notation}. In addition to these definitions, let us precisely define the weighted directed acyclic graph $\mathcal{G=(V,E)}$ of strongly connected components (in short, SCCs) corresponding to a directed graph $G=(V,E)$. In this graph, $\mathcal V$ is the set of SCCs of $G$, and, for any two SCCs $C,D \in \mathcal V$, $(C,D)\in\mathcal E$ if and only if there is an arc in $E$ from a node in $C$ to the a node in $D$. For each SCC $C\in \mathcal V$, the weight $w(C)$ of $C$ is equal to $|C|$, that is, the number of nodes in the SCC $C$. Note that, for each node $v \in C$, $\reac v=\sum_{D\in \Reac C}w(D)$, where $\Reac C$ denotes the set of SCCs that are reachable from $C$ in $\mathcal G$, and $\reac C$ denotes its cardinality.

\subsection{Structure of the Paper}

In \Cref{sec:alg}, we will explain how our algorithm works, and we will prove its correctness. \Cref{sec:exp} will experimentally prove that our algorithm outperforms the best available algorithms, by performing several tests on a dataset of real-world networks. \Cref{sec:imdb} applies the new algorithm to the analysis of the $10$ most central actors in the IMDB actor network.

\section{The Algorithm} \label{sec:alg}
\subsection{Overview} \label{sec:overview}

As we already said, the textbook algorithm for computing the $k$ vertices with largest closeness performs a BFS from each vertex $v$,  computes its closeness $\clos v$, and, finally, returns the $k$ vertices with biggest $\clos v$ values. Similarly, our algorithm (see \Cref{alg:main}) sets $\clos v=\bfsctt(v,x_k)$, where $x_k$ is the $k$-th biggest closeness value found until now ($x_k=0$ if we have not processed at least $k$ vertices). If $\bfsctt(v,x_k)=0$, it means that $v$ is not one of $k$ most central vertices, otherwise $\clos v$ is the actual closeness of $v$. This means that, at the end, the $k$ vertices with biggest closeness values are again the $k$ most central vertices. In order to speed-up the function \bfsctt$(v,x_k)$, we want $x_k$ to be as big as possible, and consequently we need to process central vertices as soon as possible. To this purpose, following the idea of \cite{Lim2011}, we process vertices in decreasing order of degree.

\begin{algorithm}[ht]
\SetKwFunction{bfscf}{\bfsc}
\SetKwFunction{preprocf}{\preproc}
\SetKwFunction{kth}{Kth}
\SetKwFunction{topk}{TopK}
\preprocf($G$); // see Section~\ref{sec:alphaomega}\\
$\clos v \gets 0$ for each $v$; \\
$x_k \gets 0$; \\
\For{$v \in V$ in decreasing order of degree}{
    $\clos v \gets$ \bfscf($v,x_k$); \\
    \If{$\clos v \neq 0$}{
        $x_k \gets \kth(c)$; \\
    }
}
\Return $\topk(c)$;
\caption{overview of the new algorithm. The function \texttt{Kth}$(c)$ returns the $k$-th biggest element of $c$ and the function \texttt{TopK}$(c)$ returns the $k$ biggest elements.}
\label{alg:main}
\end{algorithm}

As we will see in the next sections, \Cref{alg:main} needs linear time in order to execute the preprocessing. Moreover, it requires time $\O(n \log n)$ to sort vertices, $\O(\log k)$ to perform function \texttt{\upshape Kth}, and $\O(1)$ to perform function \texttt{\upshape TopK}, by using a priority queue containing at each step the $k$ most central vertices. Since all other operations need time $\O(1)$, the total running time is $\O(m+n \log n+n\log k+T)=\O(m+n\log n+T))$, where $T$ is the time needed to perform the function \bfsctt\ $n$ times.

Before explaining in details how the function \bfsctt\ operates, let us note that we can easily parallelise the \texttt{for} loop in \Cref{alg:main}, by giving each vertex to a different thread. The parallelisation is almost complete, because the only ``non-parallel'' part deals with the shared variable $x_k$, which must be updated in a synchronised way. In any case, this shared variable does not affect performances. First of all, the time needed for the update is very small, compared to the time of BFSes. Moreover, in principle, a thread might use an ``old'' value of $x_k$, and consequently lose performance: we will experimentally show that the effect of this positive race condition is negligible (\Cref{sec:expbig}).

\subsection{An Upper Bound on the Closeness of Nodes}

The goal of this section is to define an upper bound $\uclos v$ on the closeness of a node $v$, which has to be updated whenever, for any $d \geq 0$, all nodes in $\Gamma_d(v)$ has been reached by the BFS starting from $v$ (that is, whenever the exploration of the $d$-th level of the BFS tree is finished). More precisely, this upper bound is obtained by first proving a lower bound on the farness of $v$, as shown in the following lemma.

\begin{lemma} \label{lem:boundokconn}
For each vertex $v$ and for each $d \geq 0$, 
\[\farn v \geq \lfarn dv{\reac v}\]
(see \Cref{tbl:notation} for the definition of $\lfarn dv{\reac v}$).
\end{lemma}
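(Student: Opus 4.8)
The plan is to split the farness $\farn v$ according to whether a reachable node lies inside or outside the ball $\ball dv$, and then to bound from below the contribution of the outside nodes. First I would write
\[\farn v = \sum_{w \in \ball dv} d(v,w) + \sum_{w \in \Reac v \setminus \ball dv} d(v,w) = \dfarn dv + \sum_{w \in \Reac v \setminus \ball dv} d(v,w),\]
using the very definition of $\dfarn dv$ for the first term. This reduces the whole problem to lower-bounding the second sum, which ranges over the $\reac v - \sball dv$ reachable nodes at distance strictly greater than $d$.

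Next I would split those outside nodes by distance: exactly $\sneig{d+1}v$ of them lie at distance $d+1$ (these are precisely the nodes of $\neig{d+1}v$, all of which are reachable and outside the ball), while the remaining $\reac v - \sball dv - \sneig{d+1}v$ lie at distance at least $d+2$. Bounding each node by its smallest possible distance gives
\[\sum_{w \in \Reac v \setminus \ball dv} d(v,w) \geq (d+1)\sneig{d+1}v + (d+2)\bigl(\reac v - \sball dv - \sneig{d+1}v\bigr) = (d+2)\bigl(\reac v - \sball dv\bigr) - \sneig{d+1}v,\]
where the simplification collapses the two terms because the $(d+1)$-level and the deeper levels differ by exactly one in their coefficients.

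The final step, and the only point that requires care, is replacing $\sneig{d+1}v$ by its upper bound $\uneig{d+1}v$. Since $\sneig{d+1}v$ now enters the estimate with a negative coefficient, the inequality $\sneig{d+1}v \leq \uneig{d+1}v$ yields $-\sneig{d+1}v \geq -\uneig{d+1}v$, so the substitution only weakens the bound and preserves its direction. Chaining this with the split above gives $\farn v \geq \dfarn dv - \uneig{d+1}v + (d+2)(\reac v - \sball dv) = \lfarn dv{\reac v}$, as required. I expect the main obstacle to be exactly this sign bookkeeping: one must verify that using an \emph{upper} bound on $\sneig{d+1}v$ is legitimate when deriving a \emph{lower} bound on $\farn v$, which is valid precisely because the $d+1$ level carries the smaller coefficient and hence surfaces as a subtracted term.
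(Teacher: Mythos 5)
Your proposal is correct and follows essentially the same route as the paper: decompose the farness over $\ball dv$ and its complement, lower-bound the outside nodes by assigning distance $d+1$ to the $\sneig{d+1}v$ nodes of the next level and $d+2$ to the rest, simplify, and then replace $\sneig{d+1}v$ by $\uneig{d+1}v$, which is legitimate because it appears with a negative sign. Your explicit justification of that last sign flip is a welcome clarification of a step the paper states only implicitly.
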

\begin{proof}
From the definitions of \Cref{tbl:notation}, it follows that 
\[\farn v \geq \dfarn dv + (d+1)\sneig {d+1}v+(d+2)(\reac v-\sball {d+1}v).\]

Since $\sball {d+1}v=\sneig {d+1}v+\sball dv$, 
\[\farn v \geq \dfarn dv -\sneig {d+1}v+(d+2)(\reac v-\sball dv).\]

Finally, since $\uneig {d+1}v$ is an upper bound on $\sneig {d+1}v$,
\[\farn v \geq \dfarn dv -\uneig {d+1}v+(d+2)(\reac v-\sball dv)=\lfarn dv{\reac v}\]
and the lemma follows.\qed
\end{proof}

\begin{corollary}\label{cor:ubound}
For each vertex $v$ and for each $d \geq 0$, $\clos v \leq \uclos dv$, where $\uclos dv$ is defined in \Cref{tbl:notation}.
\end{corollary}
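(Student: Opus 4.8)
The plan is to obtain the corollary as an immediate monotonicity consequence of Lemma~\ref{lem:boundokconn}. Reading off the definitions in \Cref{tbl:notation}, $\clos v = \frac{(\reac v-1)^2}{(n-1)\farn v}$ and $\uclos dv = \frac{(\reac v-1)^2}{(n-1)\lfarn dv{\reac v}}$ share the same non-negative numerator $\frac{(\reac v-1)^2}{n-1}$ and differ only in their denominators, $\farn v$ against $\lfarn dv{\reac v}$. Since Lemma~\ref{lem:boundokconn} already supplies $\farn v \geq \lfarn dv{\reac v}$, the entire statement reduces to the elementary fact that enlarging the denominator of a fraction with non-negative numerator cannot increase its value. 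So I do not expect to introduce any new combinatorial ingredient; the work is entirely in handling the signs correctly.

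First I would dispose of the degenerate regime in which the farness lower bound is non-positive, i.e. $\lfarn dv{\reac v} \leq 0$. In that case $\uclos dv$ conveys no information and is to be read as $+\infty$, so the claimed inequality $\clos v \leq \uclos dv$ holds vacuously. This is precisely the situation early in a BFS, when $d$ is still small and not enough levels have been explored for the bound on $\farn v$ to become meaningful.

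For the principal case $\lfarn dv{\reac v} > 0$, I would chain $\farn v \geq \lfarn dv{\reac v} > 0$ to conclude that both denominators are strictly positive. Taking reciprocals then reverses the inequality to $\frac{1}{\farn v} \leq \frac{1}{\lfarn dv{\reac v}}$, and multiplying through by the non-negative factor $\frac{(\reac v-1)^2}{n-1}$ preserves the direction, delivering exactly $\clos v \leq \uclos dv$.

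The computation here is routine; the one point deserving care, and the only place I anticipate a subtlety, is the sign of the denominator. Because $\lfarn dv{\reac v}$ is merely a lower bound, it can in principle be non-positive before enough BFS levels are explored, in which case literal reciprocation would flip the conclusion. Making explicit that this regime yields a trivial (infinite) upper bound is what keeps the monotonicity step sound uniformly in $d$, and I expect that bookkeeping to be the entire content of the proof beyond invoking the lemma.
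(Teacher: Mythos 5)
Your argument is correct and is exactly the one the paper intends: the corollary is stated there without proof as an immediate consequence of Lemma~\ref{lem:boundokconn}, since $\clos v$ and $\uclos dv$ share the common nonnegative numerator $(\reac v-1)^2/(n-1)$ and the lemma lower-bounds the denominator $\farn v$ by $\lfarn dv{\reac v}$. Your extra care about the regime $\lfarn dv{\reac v}\le 0$ is well placed --- read literally, the table's definition of $\uclos dv$ would then be negative or undefined and the stated inequality would fail, a sign issue the paper itself only confronts later in Section~\ref{sec:ubound}, where it switches to lower-bounding $1/\clos v$ instead (Lemma~\ref{lem:lowbound}).
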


\subsection{Computing the Upper Bound}\label{sec:ubound}

Apart from $\reac v$, all quantities necessary to compute $\lfarn dv{\reac v}$ (and, hence, to compute the upper bound of Lemma~\ref{cor:ubound}) are available as soon as all vertices in $\ball dv$ are visited by a BFS. Note that, if the graph $G$ is (strongly) connected, then $\reac v=n$ is also available. Moreover, if the graph $G$ is undirected (but not necessarily connected), we can compute $\reac v$ for each vertex $v$ in linear time, at the beginning of the algorithm (by simply computing the connected components of $G$).\footnote{Note that if $G$ is undirected, Lemma~\ref{lem:boundokconn} still holds if we redefine, for any $d\geq 1$, $\uneig {d+1}v=\sum_{u \in \neig dv} (\degr u-1)$, because at least one edge from each vertex $u$ in $\neig dv$ connects $u$ to a node in $\neig {d-1}v$.
} It thus remain to deal with the case in which $G$ is directed and not strongly connected.

In this case, let us assume, for now, that we know a lower (respectively, upper) bound $\alpha(v)$ (respectively, $\omega(v)$) on $\reac v$ (see also Table~\ref{tbl:notation}): without loss of generality we can assume that $\alpha(v)>1$. The next lemma shows that, instead of examining all possible values of $\reac v$ between $\alpha(v)$ and $\omega(v)$, it is sufficient to examine only the two extremes of this interval. In order to apply this idea, however, we have to work with the inverse $\frac{1}{\clos v}$ of the closeness of $v$, because otherwise the denominator might vanish or even be negative (due to the fact that we are both upper bounding $\sneig {d+1}v$ and lower bounding $\reac v$). In other words, the lemma will provide us a lower bound $\lambda_d(v)$ on $\frac{1}{\clos v}$ (so that, if $\lambda_d(v)$ is negative, then $\lambda_d(v) \leq \frac{1}{\clos v}$ trivially holds).

\begin{lemma}\label{lem:lowbound}
For each vertex $v$ and for each $d\geq 0$,$$\frac{1}{\clos v} \geq \lambda_d(v):=(n-1)\min\left(\frac{\lfarn dv{\alpha(v)}}{(\alpha(v)-1)^2}, \frac{\lfarn dv{\omega(v)}}{(\omega(v)-1)^2}\right).$$
\end{lemma}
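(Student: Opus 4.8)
The plan is to reduce the statement to a one-variable analysis. Starting from the definition of closeness, write
$$\frac{1}{\clos v} = \frac{(n-1)\,\farn v}{(\reac v-1)^2}.$$
Since $\alpha(v)>1$ we have $\reac v>1$, so the factor $(n-1)/(\reac v-1)^2$ is strictly positive; applying Lemma~\ref{lem:boundokconn} (which gives $\farn v \geq \lfarn dv{\reac v}$) therefore preserves the inequality and yields
$$\frac{1}{\clos v} \geq (n-1)\,\frac{\lfarn dv{\reac v}}{(\reac v-1)^2}.$$
It thus suffices to prove that, over the interval $[\alpha(v),\omega(v)]$, the quantity $g(x):=\lfarn dv{x}/(x-1)^2$ attains its minimum at one of the two endpoints, since then $g(\reac v)\geq\min(g(\alpha(v)),g(\omega(v)))$ and the right-hand side is exactly $\lambda_d(v)/(n-1)$.

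The key observation is that, by its definition in Table~\ref{tbl:notation}, $\lfarn dv{x}$ is an \emph{affine} function of $x$, of the form $A+(d+2)x$, where $A=\dfarn dv-\uneig{d+1}v-(d+2)\sball dv$ does not depend on $x$ and the slope $d+2$ is strictly positive. Hence $g(x)=(A+(d+2)x)/(x-1)^2$ is a ratio of an affine function and a strictly positive quadratic on $(1,\infty)$. I would then differentiate: a short computation shows that, for $x>1$, the sign of $g'(x)$ equals the sign of the numerator $-\bigl((d+2)x+(d+2)+2A\bigr)$ obtained after clearing the positive factor $(x-1)^3$. This numerator is an affine function of $x$ with negative leading coefficient (here the positivity of the slope $d+2$ is used), so $g'$ changes sign at most once, from positive to negative. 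Consequently $g$ is unimodal on $(1,\infty)$, i.e.\ increasing and then decreasing (possibly degenerating to a monotone function).

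A unimodal function is quasi-concave, so on any closed subinterval $[\alpha(v),\omega(v)]\subset(1,\infty)$ its minimum is attained at an endpoint; in particular $g(\reac v)\geq\min(g(\alpha(v)),g(\omega(v)))$ for every $\reac v$ in this interval, and integrality of $\reac v$ plays no role. Combining this with the displayed lower bound on $1/\clos v$ gives exactly the claimed inequality $\tfrac{1}{\clos v}\geq\lambda_d(v)$.

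The only mildly delicate point---the ``main obstacle''---is the sign analysis of $g'$: one has to confirm that the derivative's numerator is genuinely affine with a negative slope, so that no interior local minimum can occur and the extremum over $[\alpha(v),\omega(v)]$ must sit at a boundary point. Everything else is routine algebra. Note also that we never divide by $\lfarn dv{x}$, which is precisely what lets the argument go through even when this lower bound on the farness is negative: working with $1/\clos v$ rather than $\clos v$ keeps all denominators positive and all inequality directions under control.
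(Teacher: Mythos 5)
Your proposal is correct and follows essentially the same route as the paper: both lower-bound $\farn v$ by the affine function $\lfarn dv{\reac v}$ of $\reac v$ via Lemma~\ref{lem:boundokconn}, divide by the positive quantity $(\reac v-1)^2$, and show by a derivative sign analysis that the resulting rational function has no interior local minimum, so its minimum over $[\alpha(v),\omega(v)]$ is attained at an endpoint. The only difference is cosmetic (you keep the variable $x=\reac v$ where the paper substitutes $x=\reac v-1$ and writes $g(x)=\frac{ax-b}{x^2}$), and your observation that unimodality suffices even in the degenerate monotone case quietly subsumes the paper's separate verification that $b>0$.
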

\begin{proof}
From \Cref{lem:boundokconn}, if we denote $a=d+2$ and $b=\uneig {d+1}v+a(\sball dv-1)-\dfarn dv$,
\begin{align*}
\farn v &\geq \dfarn dv-\uneig {d+1}v+a({\reac v}-\sball dv) \\
&= a({\reac v}-1)+\dfarn dv-\uneig {d+1}v-a(\sball dv-1) \\
&= a({\reac v}-1)-b.
\end{align*}
Note that $a>0$ because $d>0$, and $b>0$ because
\[\dfarn dv = \sum_{w \in \ball dv} d(v,w) \leq d(\sball dv-1) < a(\sball dv-1)\]
where the first inequality holds because, if $w=v$, then $d(v,w)=0$, and if $w\in \ball dv$, then $d(v,w)\leq d$. 
Hence, $\frac{1}{c_v} \geq (n-1)\frac{a(\reac v-1)-b}{(\reac v -1)^2}$. Let us consider the function $g(x)=\frac{ax-b}{x^2}$. The derivative $g'(x)=\frac{-ax+2b}{x^3}$ is positive for $0<x<\frac{2b}{a}$ and negative for $x>\frac{2b}{a}$: this means that $\frac{2b}{a}$ is a local maximum, and there are no local minima for $x>0$. Consequently, in each closed interval $[x_1,x_2]$ where $x_1$ and $x_2$ are positive, the minimum of $g(x)$ is reached in $x_1$ or $x_2$. Since $0<\alpha(v)-1 \leq \reac v-1 \leq \omega(v)-1$, 
\[g(\reac v-1) \geq \min(g(\alpha(v)-1), g(\omega(v)-1))\]
and the conclusion follows.
\end{proof}

\subsection{Computing $\alpha(v)$ and $\omega(v)$}\label{sec:alphaomega}

It now remains to compute $\alpha(v)$ and $\beta(v)$ (in the case of a directed graph which is not strongly connected). This can be done during the preprocessing phase of our algorithm as follows. Let $\mathcal{G=(V,E)}$ be the weighted directed acyclic graph of SCCs, as defined in Section~\ref{sec:defs}. We already observed that, if $v$ and $w$ are in the same SCC, then $\reac v=\reac w=\sum_{D\in \mathcal R(C)}w(D)$, where $\mathcal R(C)$ denotes the set of SCCs that are reachable from $C$ in $\mathcal G$. This means that we simply need to compute a lower (respectively, upper) bound $\alpha(C)$ (respectively, $\omega(C)$) on $\sum_{D\in \mathcal R(C)}w(D)$, for every SCC $C$. To this aim, we first compute a topological sort $\{C_1, \dots, C_l\}$ of $\mathcal V$ (that is, if $(C_i,C_j) \in \mathcal E$, then $i<j$). Successively, we use a dynamic programming approach, and, by starting from $C_l$, we process the SCCs in reverse topological order, and we set$$\alpha(C)=w(C) + \max_{(C, D) \in \mathcal E} \alpha(D)$$and$$\omega(C)=w(C) + \sum_{(C, D) \in \mathcal E} \omega(D).$$ Note that processing the SCCs in reverse topological ordering ensures that the values $\alpha(D)$ and $\omega(D)$ on the right hand side of these equalities are available when we process the SCC $C$. Clearly, the complexity of computing $\alpha(C)$ and $\omega(C)$, for each SCC $C$, is linear in the size of $\mathcal G$, which in turn is smaller than $G$.

Observe that the bounds obtained through this simple approach can be improved by using some ``tricks''. First of all, when the biggest SCC $\tilde{C}$ is processed, we do not use the dynamic programming approach and we can exactly compute $\sum_{D\in \mathcal R(\tilde{C})}w(D)$ by simply performing a BFS starting from any node in $\tilde{C}$. This way, not only $\alpha(\tilde{C})$ and $\omega(\tilde{C})$ are exact, but also $\alpha(C)$ and $\omega(C)$ are improved for each SCC $C$ from which it is possible to reach $\tilde{C}$. Finally, in order to compute the upper bounds for the SCCs that are able to reach $\tilde{C}$, we can run the dynamic programming algorithm on the graph obtained from $\mathcal{G}$ by removing all components reachable from $\tilde{C}$, and we can then add $\sum_{D\in \mathcal R(\tilde{C})}w(D)$. 

\subsection{The \bfsc\ Function} \label{sec:bfsc}

We are now ready to define the function \bfsctt$(v,x)$, which returns the closeness $\clos v$ of vertex $v$ if $\clos v>x$, $0$ otherwise. To do so, this function performs a BFS starting from $v$, and during the BFS it updates the upper bound $\uclos dv \geq \clos v$ (the update is done whenever all nodes in $\Gamma_d(v)$ has been reached): as soon as $\uclos dv \leq x$, we know that $\clos v \leq \uclos dv \leq x$, and we return $0$. If this situation never occurs, at the end of the visit we have clearly computed $\clos v$.

Algorithm~\ref{alg:bfscut} is the pseudo-code of the function \bfsctt\ when implemented for strongly connected graphs (recall that, in this case, $r(v) = n$): this code can be easily adapted to the case of undirected graphs (see the beginning of Section~\ref{sec:ubound}) and to the case of directed (not necessarily strongly connected) graphs (see Lemma~\ref{lem:lowbound}).

\begin{algorithm}[ht]
\SetKwFunction{bfscf}{\bfsc}
\SetKwFunction{preprocf}{\preproc}
\SetKwFunction{kth}{Kth}
\SetKwFunction{topk}{TopK}
Create queue $Q$;\\
$Q$.enqueue($v$);\\
Mark $v$ as visited;\\
$d \gets 0$; $f \gets 0$;  $\tilde{\gamma} \gets 0$; $nd \gets 0$; \\
\While{$Q$ is not empty}{
    $u \gets Q$.dequeue(); \\
    \If{$d(v,u)>d$}{
        $\tilde{f} \gets f-\tilde{\gamma}+(d+2)(n-nd)$;\\
        $\tilde{c} \gets \frac{(n-1)}{\tilde{f}}$;\\
        \If{$\tilde{c}\leq x$}{\Return 0;}
        $d \gets d+1$;\\
    }
    $f \gets f+d(v,u)$;\\
    $\tilde{\gamma} \gets \tilde{\gamma}+\odegr u$;\\
    $nd \gets nd+1$;\\
    \For{$w$ in adjacency list of $u$}{
        \If{$w$ is not visited}{
            $Q$.enqueue($w$);\\
            Mark $w$ as visited;\\
        }
    }
}
\Return $\frac{n-1}{f}$;
\caption{the \bfsctt$(v,x)$ function in the case of strongly connected directed graphs.}
\label{alg:bfscut}
\end{algorithm}

\section{Experimental Results} \label{sec:exp}
In this section, we will test our algorithm on several real-world networks, in order to show its performances. All the networks used in our experiments were collected from the datasets SNAP (\url{snap.stanford.edu/}), NEXUS  (\url{nexus.igraph.org}), LASAGNE (\url{piluc.dsi.unifi.it/lasagne}), LAW (\url{law.di.unimi.it}), and IMDB (\url{www.imdb.com}). Our tests have been performed on a server running an Intel(R) Xeon(R) CPU E5-4607 0, 2.20GHz, with 48 cores, 250GB RAM, running Ubuntu 14.04\_2 LTS; our code has been written in Java 1.7, and it is available at \url{tinyurl.com/kelujv7}.

\subsection{Comparison with the State of the Art} \label{sec:expsmall}

In order to compare the performance of our algorithm with state of the art approaches, we have selected 21 networks whose number of nodes ranges between $1\,000$ and $30\,000$ nodes.

We have compared our algorithm \newalg\ with our implementations of the best available algorithms for top-$k$ closeness centrality.\footnote{Note that the source code of our competitors is not available.} The first one \cite{Olsen2014} is based on a pruning technique and on $\Delta$-BFS, a method to reuse information collected during a BFS from a node to speed up a BFS from one of its in-neighbors; we will denote this algorithm as \deltabfs. The second one provides top-$k$ closeness centralities with high probability \cite{Okamoto2008}. It is based on performing BFSes from a random sample of nodes to estimate the closeness centrality of all the other nodes, then computing the exact centrality of all the nodes whose estimate is big enough. Note that this algorithm requires the input graph to be (strongly) connected: for this reason, differently from the other algorithms, we have run this algorithm on the largest (strongly) connected component. We will denote this latter algorithm as \sampl.

In order to perform a fair comparison we have considered the \emph{improvement factor} $\frac{\mvis}{\mtot}$, where $\mvis$ is the number of arcs visited during the algorithm and $\mtot$ is the number of arcs visited by the \emph{textbook} algorithm, which is based on all-pair shortest-path. It is worth observing that $\mtot$ is $m\cdot n$ if the graph is (strongly) connected, but it might be smaller in general. The improvement factor does not consider the preprocessing time, which is negligible, and it is closely related to the running-time, since all the three algorithms are based on performing BFSes. Furthermore, it does not depend on the implementation and on the machine used for the algorithm. In the particular case of \deltabfs, we have just counted the arcs visited in BFS and $\Delta$-BFS, without considering all the operations done in the pruning phases (see \cite{Olsen2014}).

Our results are summarized in \Cref{tab:comparison}, where we report the arithmetic mean and the geometric mean of all the improvement factors (among all the graphs). In our opinion, the geometric mean is more significant, because the arithmetic mean is highly influenced by the maximum value. 
\begin{table}[htb]
\caption{comparison of the improvement factors of the new algorithm (\newalg), of the algorithm in \cite{Olsen2014} ({\scshape \deltabfs}), and the algorithm in \cite{Okamoto2008} (\sampl). Values are computed for $k=1$ and $k=10$, and they are averaged over all graphs in the dataset.}
\label{tab:comparison}
\centering
\begin{scriptsize}
\begin{tabular}{|@{\hspace{3pt}}c@{\hspace{3pt}}|l|r|r|r|r|r|r|}
\hline
 & & \multicolumn{3}{c|}{\textsc{Arithmetic Mean}} & \multicolumn{3}{c|}{\textsc{Geometric Mean}} \\
$k$ & \textsc{Alg} &  \multicolumn{1}{c|}{\textsc{Dir}} & \multicolumn{1}{c|}{\textsc{Undir}} & \multicolumn{1}{c|}{\textsc{Both}} & \multicolumn{1}{c|}{\textsc{Dir}} & \multicolumn{1}{c|}{\textsc{Undir}} & \multicolumn{1}{c|}{\textsc{Both}} \\
\hline
& \newalg & \textbf{4.5\%} & \textbf{2.8\%} & \textbf{3.46\%} & \textbf{1.3\%} & \textbf{0.8\%} & \textbf{0.9\%} \\
1 & \deltabfs & 43.5\% & 24.2\% & 31.6\% & 35.6\% & 15.8\% & 21.5\% \\
 & \sampl & 72.3\% & 45.4\% & 55.6\% & 67.3\% & 42.8\% & 50.8\% \\
 \hline
& \newalg & \textbf{14.1\%} & \textbf{5.3\%} & \textbf{8.6\%} & \textbf{6.3\%} & \textbf{2.7\%} & \textbf{3.8\%} \\
10 & \deltabfs & 43.6\% & 24.2\% & 31.6\% & 35.6\% & 15.8\% & 21.6\% \\
 & \sampl & 80.7\% & 59.3\% & 67.5\% & 78.4\% & 57.9\% & 65.0\% \\
\hline
\end{tabular}
\end{scriptsize}
\end{table}

More detailed results are available in \Cref{tab:compext}.

\begin{table*}[tb]
\caption{detailed comparison of the improvement factor of the three algorithms with respect to the all-pair-shortest-path algorithm, with $k=1$ and $k=10$.}
\label{tab:compext}
\centering
\begin{scriptsize}
\begin{tabular}{|l|r|r|r|r|r|r|r|r|}
\multicolumn{9}{c}{\textbf{Directed Networks}} \\
\hline
     & &  &  \multicolumn{3}{c|}{$k=1$} & \multicolumn{3}{c|}{$k=10$} \\
\textsc{Network} & \textsc{Nodes} & \textsc{Edges} & \multicolumn{1}{c|}{\newalg} & \multicolumn{1}{c|}{\deltabfs} & \multicolumn{1}{c|}{\sampl} & \multicolumn{1}{c|}{\newalg} & \multicolumn{1}{c|}{\deltabfs} & \multicolumn{1}{c|}{\sampl} \\
\hline
 polblogs                          &     1224     &     19022     &     3.052\%     &     41.131\%     &     88.323\%     &     8.491\%     &     41.321\%     &     91.992\%\\ 
 p2p-Gnutella08           &     6301     &     20777     &     4.592\%     &     53.535\%     &     87.229\%     &     23.646\%     &     53.626\%     &     92.350\%\\ 
 wiki-Vote                 &     7115     &     103689     &     0.068\%     &     25.205\%     &     40.069\%     &     0.825\%     &     25.226\%     &     62.262\%\\ 
 p2p-Gnutella09           &     8114     &     26013     &     7.458\%     &     55.754\%     &     86.867\%     &     18.649\%     &     55.940\%     &     90.248\%\\ 
 p2p-Gnutella06           &     8717     &     31525     &     0.808\%     &     52.615\%     &     77.768\%     &     18.432\%     &     52.831\%     &     88.884\%\\ 
 freeassoc                         &     10617     &     72172     &     17.315\%     &     58.204\%     &     85.831\%     &     20.640\%     &     57.954\%     &     87.300\%\\ 
 p2p-Gnutella04           &     10876     &     39994     &     2.575\%     &     56.788\%     &     84.128\%     &     21.754\%     &     56.813\%     &     89.961\%\\ 
 as-caida20071105             &     26475     &     106762     &     0.036\%     &     4.740\%     &     27.985\%     &     0.100\%     &     4.740\%     &     42.955\%\\ 
\hline
\multicolumn{9}{c}{}\\
\multicolumn{9}{c}{\textbf{Undirected Networks}} \\
\hline
     & &  &  \multicolumn{3}{c|}{$k=1$} & \multicolumn{3}{c|}{$k=10$} \\
\textsc{Network} & \textsc{Nodes} & \textsc{Edges} & \multicolumn{1}{c|}{\newalg} & \multicolumn{1}{c|}{\deltabfs} & \multicolumn{1}{c|}{\sampl} & \multicolumn{1}{c|}{\newalg} & \multicolumn{1}{c|}{\deltabfs} & \multicolumn{1}{c|}{\sampl} \\
\hline
Homo                    &     1027     &     1166     &     5.259\%     &     82.794\%     &     82.956\%     &     14.121\%     &     82.794\%     &     88.076\%\\ 
HC-BIOGRID              &     4039     &     10321     &     5.914\%     &     19.112\%     &     65.672\%     &     8.928\%     &     19.112\%     &     72.070\%\\ 
Mus\_musculus            &     4610     &     5747     &     1.352\%     &     7.535\%     &     55.004\%     &     5.135\%     &     7.535\%     &     66.507\%\\ 
Caenorhabditis\_elegans     &     4723     &     9842     &     1.161\%     &     9.489\%     &     45.623\%     &     1.749\%     &     9.489\%     &     58.521\%\\ 
 ca-GrQc                      &     5242     &     14484     &     3.472\%     &     13.815\%     &     55.099\%     &     5.115\%     &     13.815\%     &     62.523\%\\ 
advogato               &     7418     &     42892     &     0.427\%     &     82.757\%     &     41.364\%     &     0.891\%     &     82.757\%     &     61.688\%\\ 
hprd\_pp                 &     9465     &     37039     &     0.219\%     &     15.827\%     &     44.084\%     &     2.079\%     &     15.827\%     &     54.300\%\\ 
 ca-HepTh                    &     9877     &     25973     &     2.796\%     &     15.474\%     &     46.257\%     &     3.630\%     &     15.474\%     &     52.196\%\\ 
 Drosophila\_melanogaster      &     10625     &     40781     &     1.454\%     &     18.347\%     &     40.513\%     &     1.991\%     &     18.347\%     &     46.847\%\\ 
 oregon1\_010526              &     11174     &     23409     &     0.058\%     &     4.937\%     &     28.221\%     &     0.233\%     &     4.937\%     &     49.966\%\\ 
 oregon2\_010526               &     11461     &     32730     &     0.090\%     &     5.848\%     &     23.780\%     &     0.269\%     &     5.848\%     &     40.102\%\\ 
GoogleNw               &     15763     &     148585     &     0.007\%     &     7.377\%     &     33.501\%     &     4.438\%     &     7.377\%     &     75.516\%\\ 
 dip20090126\_MAX        &     19928     &     41202     &     14.610\%     &     31.627\%     &     27.727\%     &     20.097\%     &     31.673\%     &     42.901\% \\
 \hline
\end{tabular}
\end{scriptsize}
\end{table*}

In the case $k=1$ (respectively, $k=10$), the geometric mean of the improvement factor of \newalg\ is $23$ (resp.~$6$) times smaller than \deltabfs\ and $54$ times smaller than \sampl\ (resp.~$17$). Moreover we highlight that the new algorithm outperforms all the competitors in each single graph, both with $k=1$ and with $k=10$.

We have also tested our algorithm on the three unweighted graphs analyzed in \cite{Olsen2014}, respectively called \texttt{Web}, \texttt{Wiki}, and \texttt{DBLP}.  
By using a single thread implementation of \newalg, in the \texttt{Web} graph (resp. \texttt{DBLP}) we computed the top-10 nodes in 10 minutes (resp. 10 minutes) on the whole graph, having 875\,713 nodes (resp. 1\,305\,444), while \deltabfs\ needed about 25 minutes (resp. 4 hours) for a subgraph of 400\,000 nodes. The most striking result deals with \texttt{Wiki}, where \newalg\ needed 30 seconds for the whole graph having 2\,394\,385 nodes instead of about 15 minutes on a subgraph with 1 million nodes. Using multiple threads our  performances are even better, as we will show in \Cref{sec:parallel}.

\subsection{Real-World Large Networks} \label{sec:expbig}

In this section, we will run our algorithm on a bigger dataset, composed by 25 directed and 15 undirected networks, with up to 7\,414\,768 nodes and 191\,606\,827 edges.
Once again, we will consider the number of visited arcs by \newalg, i.e. $\mvis$, but this time we will analyze the performance ratio $\frac{\mvis}{mn}$ instead of the improvement factor. Indeed, due to the large size of these networks, the textbook algorithm did not finish in a reasonable time.

It is worth observing that we have been able to compute for the first time the $k$ most central nodes of networks with millions of nodes and hundreds of millions of arcs, with $k=1$ and $k=10$. The detailed results are shown in \Cref{tab:exp}, where for each network we have reported the performance ratio, both for $k=1$ and $k=10$. A summary of these results is provided by \Cref{tab:mean}, providing the arithmetic and geometric means.

\begin{table}[t]
\caption{the arithmetic and geometric mean of the performance ratios (percentage).}
\label{tab:mean}
\centering
\begin{scriptsize}
\begin{tabular}{|c|r|r|r|r|r|r|}
\hline
 & \multicolumn{3}{c|}{\textsc{Arithmetic Mean}} & \multicolumn{3}{c|}{\textsc{Geometric Mean}} \\
$k$ & \multicolumn{1}{c|}{\textsc{Dir}} & \multicolumn{1}{c|}{\textsc{Undir}} & \multicolumn{1}{c|}{\textsc{Both}} & \multicolumn{1}{c|}{\textsc{Dir}} & \multicolumn{1}{c|}{\textsc{Undir}} & \multicolumn{1}{c|}{\textsc{Both}} \\
\hline
1 & 2.89\% & 1.14\% & 2.24\%  & 0.36\% & 0.12\%  & 0.24\% \\
\hline
10 & 3.82\% & 1.83\% & 3.07\%  & 0.89\% & 0.84\% & 0.87\% \\
\hline
\end{tabular}
\end{scriptsize}
\end{table}

\begin{table*}[t]
\caption{performance ratio of the new algorithm.}
\label{tab:exp}
\centering
\begin{scriptsize}
\begin{tabular}{|l|r|r|r|r|}
\multicolumn{5}{c}{} \\
\multicolumn{5}{c}{\textbf{Directed Networks}} \\
\hline
 &  &  & \multicolumn{2}{c|}{\textsc{Performance Ratio (\%)}} \\
\textsc{Network} & \multicolumn{1}{c|}{\textsc{Nodes}} & \multicolumn{1}{c|}{\textsc{Edges}} & \multicolumn{1}{c|}{$k=1$} & \multicolumn{1}{c|}{$k=10$} \\
\hline
cit-HepTh & 27770 & 352768 & 2.31996 & 4.65612\\
cit-HepPh & 34546 & 421534 & 1.21083 & 1.64227\\
p2p-Gnutella31 & 62586 & 147892 & 0.73753 & 2.51074\\
soc-Epinions1 & 75879 & 508837 & 0.16844 & 1.92346\\
soc-sign-Slashdot081106 & 77350 & 516575 & 0.59535 & 0.65012\\
soc-Slashdot0811 & 77360 & 828161 & 0.01627 & 0.48842\\
twitter-combined & 81306 & 1768135 & 0.76594 & 1.03332\\
soc-sign-Slashdot090216 & 81867 & 545671 & 0.5375 & 0.58774\\
soc-sign-Slashdot090221 & 82140 & 549202 & 0.54833 & 0.5995\\
soc-Slashdot0902 & 82168 & 870161 & 0.01662 & 0.76048\\
gplus-combined & 107614 & 13673453 & 0.36511 & 0.3896\\
amazon0302 & 262111 & 1234877 & 10.16028 & 11.90729\\
email-EuAll & 265214 & 418956 & 0.00192 & 0.00764\\
web-Stanford & 281903 & 2312497 & 6.55454 & 10.67736\\
web-NotreDame & 325729 & 1469679 & 0.04592 & 0.62945\\
amazon0312 & 400727 & 3200440 & 8.17931 & 9.40879\\
amazon0601 & 403394 & 3387388 & 7.80459 & 9.33853\\
amazon0505 & 410236 & 3356824 & 7.81823 & 9.11571\\
web-BerkStan & 685230 & 7600595 & 21.0286 & 23.70427\\
web-Google & 875713 & 5105039 & 0.13662 & 0.23239\\
in-2004 & 1382870 & 16539643 & 1.81649 & 2.51671\\
soc-pokec-relationships & 1632803 & 30622564 & 0.00411 & 0.02257\\
wiki-Talk & 2394385 & 5021410 & 0.00029 & 0.00247\\
indochina-2004 & 7414768 & 191606827 & 1.341 & 2.662 \\
 \hline
 
 \multicolumn{5}{c}{} \\
\multicolumn{5}{c}{\textbf{Undirected Networks}} \\
\hline
 &  &  & \multicolumn{2}{c|}{\textsc{Performance Ratio (\%)}} \\
\textsc{Network} & \multicolumn{1}{c|}{\textsc{Nodes}} & \multicolumn{1}{c|}{\textsc{Edges}} & \multicolumn{1}{c|}{$k=1$} & \multicolumn{1}{c|}{$k=10$} \\
\hline
ca-HepPh & 12008 & 118489 & 9.41901 & 9.57862\\
ca-AstroPh & 18772 & 198050 & 1.35832 & 3.2326\\
ca-CondMat & 23133 & 93439 & 0.23165 & 1.07725\\
dblp-conf2015-net-bigcomp & 31951 & 95084 & 2.46188 & 3.42476\\
email-Enron & 36692 & 183831 & 0.10452 & 0.28912\\
loc-gowalla-edges & 196591 & 950327 & 0.00342 & 3.04066\\
com-dblp.ungraph & 317080 & 1049866 & 0.20647 & 0.312\\
com-amazon.ungraph & 334863 & 925872 & 2.55046 & 3.08037\\
com-youtube.ungraph & 1134890 & 2987624 & 0.04487 & 0.60811\\
dblp22015-net-bigcomp & 1305444 & 6108712 & 0.01618 & 0.0542\\
as-skitter & 1696415 & 11095298 & 0.54523 & 0.61078\\
com-orkut.ungraph & 3072441 & 117185083 & 0.00241 & 0.38956 \\
\hline
\end{tabular}
\end{scriptsize}
\end{table*}

First of all, we note that the values obtained are impressive: the geometric mean is always below $1\%$, and for $k=1$ it is even smaller. The arithmetic mean is slightly bigger, mainly because of \texttt{amazon} product-co-purchasing networks, two web networks and one collaboration network, where the performance ratio is quite high. 
Most of the other networks have a very low performance ratio: with $k=1$, $65\%$ of the networks are below $1\%$, and $32.5\%$ of the networks are below $0.1\%$. With $k=10$, $52.5\%$ of the networks are below $1\%$ and $12.5\%$ are below $0.1\%$.

We also outline that in some cases the performance ratio is even smaller: a striking example is \texttt{com-Orkut}, where our algorithm for $k=1$ is more than $40\,000$ times faster than the \emph{textbook} algorithm, whose performance is $m\cdot n$, because the graph is connected.


\subsection{Multi-Thread Experiments} \label{sec:parallel}

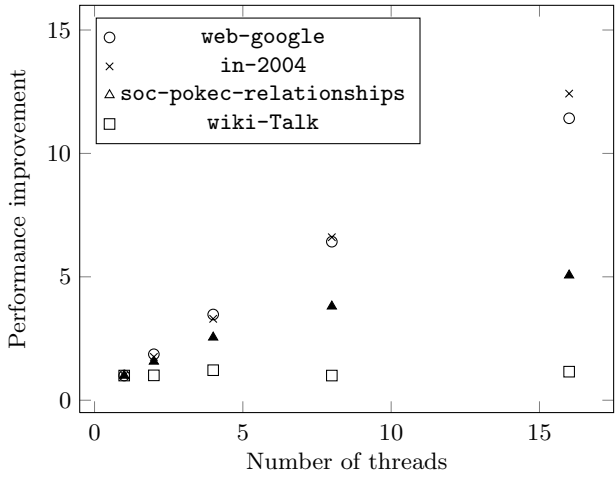
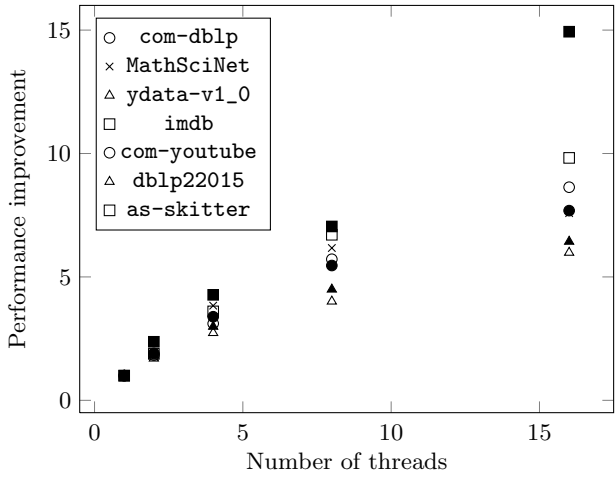
\begin{figure}[t]
\centering
\begin{subfigure}[t]{0.49\textwidth}
\begin{tikzpicture}
\tikzset{
did/.style={rectangle,draw=none,fill=none,inner sep=0cm},
every node/.style={did}}

\pgfplotsset{cycle list={{mark=*}},height=7cm,width=\textwidth}
\begin{axis}[xlabel=Number of threads,
			ylabel=Performance improvement,
			y label style={did,at={(0.06,0.5)}},
			x tick label style={did, minimum height = 0.5cm},
			ymax=16,
			x label style={did},
			y tick label style={did, inner sep=3pt},
			legend pos=north west,
			legend entries={\texttt{web-google},\texttt{in-2004},\texttt{soc-pokec-relationships},\texttt{wiki-Talk}}]
			                   
\addplot+[mark=o,only marks] coordinates {
(1,1)
(2,1.8625782969)
(4,3.47387235)
(8,6.4229301753)
(16,11.4237030445)
};
\addplot+[mark=x,only marks] coordinates {
(1,1)
(2,1.755299418)
(4,3.2991375955)
(8,6.6038075903)
(16,12.4233045487)
};
\addplot+[mark=triangle*,only marks] coordinates {
(1,1)
(2,1.5699385627)
(4,2.5461707797)
(8,3.8033921772)
(16,5.0614078019)
};

\addplot+[mark=square,only marks] coordinates {
(1,1)
(2,1.0082883577)
(4,1.2167946034)
(8,0.9996178101)
(16,1.1567890314)
};

\end{axis}
\end{tikzpicture}

\caption{Directed networks.}
\end{subfigure} \\

\vspace{0.5cm}

\begin{subfigure}[t]{0.49\textwidth}
\begin{tikzpicture}[]
\tikzset{
did/.style={rectangle,draw=none,fill=none,inner sep=0cm},
every node/.style={did}}

\pgfplotsset{cycle list={{mark=*}},height=7cm,width=\textwidth}
\begin{axis}[xlabel=Number of threads,
			ylabel=Performance improvement,
			y label style={did,at={(0.06,0.5)}},
			x tick label style={did, minimum height = 0.5cm},
			x label style={did},
			y tick label style={did, inner sep=3pt},
			ymax=16,
			legend pos=north west,
			legend entries={\texttt{com-dblp},\texttt{MathSciNet},\texttt{ydata-v1\_0},\texttt{imdb},\texttt{com-youtube},\texttt{dblp22015},\texttt{as-skitter}}]
			

\addplot+[mark=o,only marks] coordinates {
(1,1)
(2,1.8319216026)
(4,3.1097292191)
(8,5.7161184528)
(16,8.6345621448)
};
\addplot+[mark=x,only marks] coordinates {
(1,1)
(2,1.9249589298)
(4,3.8359639892)
(8,6.1717456734)
(16,7.5841423948)
};
\addplot+[mark=triangle,only marks] coordinates {
(1,1)
(2,1.6983666626)
(4,2.7329027554)
(8,4.0081432478)
(16,5.9853756474)
};

\addplot+[mark=square,only marks] coordinates {
(1,1)
(2,1.881716489)
(4,3.6038475269)
(8,6.7080545932)
(16,9.8218092375)
};
\addplot+[mark=*,only marks] coordinates {
(1,1)
(2,1.9010187924)
(4,3.3872238488)
(8,5.4619511567)
(16,7.6855900471)
};
\addplot+[mark=triangle*,only marks] coordinates {
(1,1)
(2,1.8267242053)
(4,2.9880411317)
(8,4.4937095499)
(16,6.4315367919)
};
\addplot+[mark=square*,only marks] coordinates {
(1,1)
(2,2.3721484246)
(4,4.2745880838)
(8,7.04450837)
(16,14.9346062401)
};
\end{axis}
\end{tikzpicture}
\caption{Undirected networks.}
\end{subfigure}
\caption{the benefits of parallelization.}
\label{fig:parallel}
\end{figure}

In this section, we will test the performance of a parallel version of \newalg\ (see \Cref{sec:overview}). In particular, we have considered the ratio between the time needed to compute the most central node with one thread and with $x$ threads, where $x\in\{1,2,4,8,16\}$. This ratio is plotted in \Cref{fig:parallel} for $k=1$ (for $k=10$ very similar results hold). Ideally, the ratio should be very close to the number of threads; however, due to memory access, the actual ratio is smaller. For graphs where the performance ratio is small, like in the case of \texttt{wiki-Talk} (see \Cref{tab:exp} in the appendix), the running time is mostly consumed by the preprocessing phase (even if it is $\O(n \log n)$); in these cases, we observe that there seems to be no room for parallelization improvements. On the other hand, when the computation is more time consuming, like in the case of \texttt{web-google} or \texttt{as-skitter}, the parallelization is very efficient and close to the optimum.

We have also tested if the positive race condition in the update of $x_k$ affects performances (see \Cref{sec:overview}), by considering the performance ratio with a varying number of threads. In \Cref{tab:posrace}, we report how much the performance ratio increases if the algorithm is run with $16$ threads instead of $1$ (more formally, if $p_i$ is the performance ratio with $i$ threads, we report $\frac{p_{16}-p_1}{p_1}$). For instance, a value of $100\%$ means that the number of visited arcs has doubled: ideally, this should result in a factor $8$ speedup, instead of a factor $16$ (that is, the number of threads).
\begin{table}[t]
\caption{the increase in performance ratio from $1$ to $16$ threads.}
\label{tab:posrace}
\centering
\begin{tabular}{|l|r|r|}
\hline
\textsc{Network} & $k=1$ & $k=10$ \\
\hline
 MathSciNet                 &  3.88\%  &  0.18\%\\ 
 com-dblp           &  1.32\%  &  0.74\%\\ 
 ydata-v1-0                 &  1.40\%  &  0.24\%\\ 
 wiki-Talk                  &  192.84\%  &  49.40\%\\ 
 web-Google                 &  1.70\%  &  0.95\%\\ 
 com-youtube        &  1.91\%  &  0.10\%\\ 
 dblp22015     &  2.54\%  &  0.47\%\\ 
 as-skitter                 &  0.14\%  &  0.18\%\\ 
 in-2004                    &  0.09\%  &  0.10\% \\ 
 soc-pokec-relationships    &  17.71\%  & 4.51\% \\ 
 imdb                       &  5.61\%  & 4.03\% \\
 \hline
\end{tabular}
\end{table}
We observe that these values are very small, especially for $k=10$, where all values except one are below $5\%$. This means that, ideally, instead of a factor $16$ improvement, we obtain a factor $15.24$. The only case where this is not verified is \texttt{wiki-Talk}, where in any case the performance ratio is very small (see \Cref{tab:exp}).

\section{IMDB Case Study} \label{sec:imdb}
In this section, we will apply the new algorithm \newalg\ to analyze the IMDB graph, where nodes are actors, and two actors are connected if they played together in a movie (TV-series are ignored). The data collected come from the website \url{http://www.imdb.com}: in line with \url{http://oracleofbacon.org}, we decided to exclude some genres from our database: awards-shows, documentaries, game-shows, news, realities and talk-shows. We analyzed snapshots of the actor graph, taken every 5 years from 1940 to 2010, and 2014. The total time needed to perform the computation was 37 minutes with 30 threads, and the results are reported in \Cref{tab:imdb}.

\begin{small}
\begin{table*}[b]
\caption{detailed results of the IMDB actor graph.}
\label{tab:imdb}
\centering
\begin{tabular}{|l|c|c|c|c|}
\hline
\textsc{Year} & \textbf{1940} & \textbf{1945} & \textbf{1950} & \textbf{1955}\\ 
\textsc{Nodes} & 69\,011 & 83\,068 & 97\,824 & 120\,430\\ 
\textsc{Edges} & 3\,417\,144 & 5\,160\,584 & 6\,793\,184 & 8\,674\,159\\ 
\textsc{Perf.~ratio} & 5.62\% & 4.43\% & 4.03\% & 2.91\%\\ 
\hline
\textsc{1st} &           Semels, Harry (I)  &               Corrado, Gino  &               Flowers, Bess  &               Flowers, Bess \\ 
\textsc{2nd} &               Corrado, Gino  &               Steers, Larry  &               Steers, Larry  &            Harris, Sam (II) \\ 
\textsc{3rd} &               Steers, Larry  &               Flowers, Bess  &               Corrado, Gino  &               Steers, Larry \\ 
\textsc{4th} &              Bracey, Sidney  &           Semels, Harry (I)  &            Harris, Sam (II)  &               Corrado, Gino \\ 
\textsc{5th} &              Lucas, Wilfred  &              White, Leo (I)  &           Semels, Harry (I)  &          Miller, Harold (I) \\ 
\textsc{6th} &              White, Leo (I)  &            Mortimer, Edmund  &           Davis, George (I)  &            Farnum, Franklyn \\ 
\textsc{7th} &           Martell, Alphonse  &               Boteler, Wade  &             Magrill, George  &             Magrill, George \\ 
\textsc{8th} &           Conti, Albert (I)  &             Phelps, Lee (I)  &             Phelps, Lee (I)  &               Conaty, James \\ 
\textsc{9th} &               Flowers, Bess  &                 Ring, Cyril  &                 Ring, Cyril  &           Davis, George (I) \\ 
\textsc{10th} &                 Sedan, Rolfe  &               Bracey, Sidney  &              Moorhouse, Bert  &               Cording, Harry \\ 
\hline
\multicolumn{5}{c}{} \\ 
\hline
\textsc{Year} & \textbf{1960} & \textbf{1965} & \textbf{1970} & \textbf{1975}\\ 
\textsc{Nodes} & 146\,253 & 174\,826 & 210\,527 & 257\,896\\ 
\textsc{Edges} & 11\,197\,509 & 12\,649\,114 & 14\,209\,908 & 16\,080\,065\\ 
\textsc{Perf.~ratio} & 2.21\% & 1.60\% & 1.14\% & 0.83\%\\ 
\hline
\textsc{1st} &               Flowers, Bess  &               Flowers, Bess  &               Flowers, Bess  &               Flowers, Bess \\ 
\textsc{2nd} &            Harris, Sam (II)  &            Harris, Sam (II)  &            Harris, Sam (II)  &            Harris, Sam (II) \\ 
\textsc{3rd} &            Farnum, Franklyn  &            Farnum, Franklyn  &              Tamiroff, Akim  &              Tamiroff, Akim \\ 
\textsc{4th} &          Miller, Harold (I)  &          Miller, Harold (I)  &            Farnum, Franklyn  &               Welles, Orson \\ 
\textsc{5th} &                 Chefe, Jack  &              Holmes, Stuart  &          Miller, Harold (I)  &              Sayre, Jeffrey \\ 
\textsc{6th} &              Holmes, Stuart  &              Sayre, Jeffrey  &              Sayre, Jeffrey  &          Miller, Harold (I) \\ 
\textsc{7th} &               Steers, Larry  &                 Chefe, Jack  &          Quinn, Anthony (I)  &            Farnum, Franklyn \\ 
\textsc{8th} &               Par\`is, Manuel  &               Par\`is, Manuel  &         O'Brien, William H.  &             Kemp, Kenner G. \\ 
\textsc{9th} &         O'Brien, William H.  &         O'Brien, William H.  &              Holmes, Stuart  &          Quinn, Anthony (I) \\ 
\textsc{10th} &               Sayre, Jeffrey  &            Stevens, Bert (I)  &            Stevens, Bert (I)  &          O'Brien, William H. \\ 
\hline
\multicolumn{5}{c}{} \\ 
\hline
\textsc{Year} & \textbf{1980} & \textbf{1985} & \textbf{1990} & \textbf{1995}\\ 
\textsc{Nodes} & 310\,278 & 375\,322 & 463\,078 & 557\,373\\ 
\textsc{Edges} & 18\,252\,462 & 20\,970\,510 & 24\,573\,288 & 28\,542\,684\\ 
\textsc{Perf.~ratio} & 0.62\% & 0.45\% & 0.34\% & 0.26\%\\ 
\hline
\textsc{1st} &               Flowers, Bess  &               Welles, Orson  &               Welles, Orson  &        Lee, Christopher (I) \\ 
\textsc{2nd} &            Harris, Sam (II)  &               Flowers, Bess  &             Carradine, John  &               Welles, Orson \\ 
\textsc{3rd} &               Welles, Orson  &            Harris, Sam (II)  &               Flowers, Bess  &          Quinn, Anthony (I) \\ 
\textsc{4th} &              Sayre, Jeffrey  &          Quinn, Anthony (I)  &        Lee, Christopher (I)  &           Pleasence, Donald \\ 
\textsc{5th} &          Quinn, Anthony (I)  &              Sayre, Jeffrey  &            Harris, Sam (II)  &               Hitler, Adolf \\ 
\textsc{6th} &              Tamiroff, Akim  &             Carradine, John  &          Quinn, Anthony (I)  &             Carradine, John \\ 
\textsc{7th} &          Miller, Harold (I)  &             Kemp, Kenner G.  &           Pleasence, Donald  &               Flowers, Bess \\ 
\textsc{8th} &             Kemp, Kenner G.  &          Miller, Harold (I)  &              Sayre, Jeffrey  &             Mitchum, Robert \\ 
\textsc{9th} &            Farnum, Franklyn  &            Niven, David (I)  &               Tovey, Arthur  &            Harris, Sam (II) \\ 
\textsc{10th} &             Niven, David (I)  &               Tamiroff, Akim  &                Hitler, Adolf  &               Sayre, Jeffrey \\ 
\hline
\multicolumn{5}{c}{} \\ 
\hline
\textsc{Year} & \textbf{2000} & \textbf{2005} & \textbf{2010} & \textbf{2014}\\ 
\textsc{Nodes} & 681\,358 & 880\,032 & 1\,237\,879 & 1\,797\,446\\ 
\textsc{Edges} & 33\,564\,142 & 41\,079\,259 & 53\,625\,608 & 72\,880\,156\\ 
\textsc{Perf.~ratio} & 0.22\% & 0.18\% & 0.19\% & 0.14\%\\ 
\hline
\textsc{1st} &        Lee, Christopher (I)  &               Hitler, Adolf  &               Hitler, Adolf  &         Madsen, Michael (I) \\ 
\textsc{2nd} &               Hitler, Adolf  &        Lee, Christopher (I)  &        Lee, Christopher (I)  &                Trejo, Danny \\ 
\textsc{3rd} &           Pleasence, Donald  &                Steiger, Rod  &              Hopper, Dennis  &               Hitler, Adolf \\ 
\textsc{4th} &               Welles, Orson  &      Sutherland, Donald (I)  &          Keitel, Harvey (I)  &           Roberts, Eric (I) \\ 
\textsc{5th} &          Quinn, Anthony (I)  &           Pleasence, Donald  &            Carradine, David  &             De Niro, Robert \\ 
\textsc{6th} &                Steiger, Rod  &              Hopper, Dennis  &      Sutherland, Donald (I)  &               Dafoe, Willem \\ 
\textsc{7th} &             Carradine, John  &          Keitel, Harvey (I)  &               Dafoe, Willem  &          Jackson, Samuel L. \\ 
\textsc{8th} &      Sutherland, Donald (I)  &          von Sydow, Max (I)  &          Caine, Michael (I)  &          Keitel, Harvey (I) \\ 
\textsc{9th} &             Mitchum, Robert  &          Caine, Michael (I)  &               Sheen, Martin  &            Carradine, David \\ 
\textsc{10th} &                Connery, Sean  &                Sheen, Martin  &                    Kier, Udo  &         Lee, Christopher (I) \\
\hline
\end{tabular}
\end{table*}
\end{small}

\paragraph{The Algorithm}
The results outline that the performance ratio decreased drastically while the graph size increased: this suggests that the performances of our algorithm increase with the input size. 
This is even more visible from \Cref{fig:actorperformance}, where we have plotted the inverse of the performance ratio with respect to the number of nodes. It is clear that the plot is very close to a line, especially if we exclude the last two values. This means that the performance ratio is close to $\frac{1}{cn}$, where $c$ is the slope of the line in the plot, and the total running time is well approximated by $\frac{1}{cn}\O(mn)=\O(m)$. This means that, in practice, for the IMDB actor graph, our algorithm is linear in the input size.

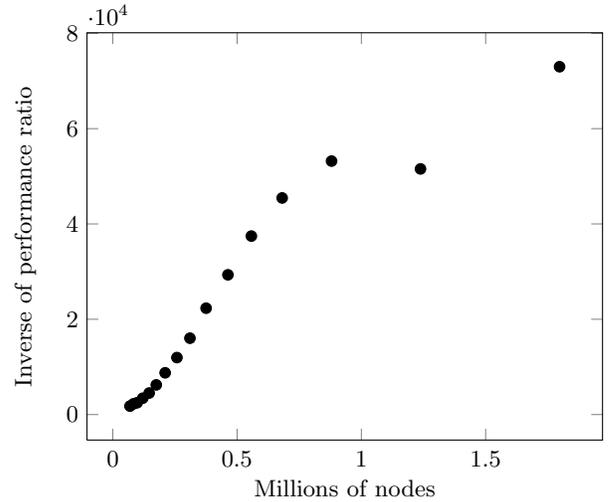
\begin{figure}
\centering
\begin{tikzpicture}
\tikzset{
did/.style={rectangle,draw=none,fill=none,inner sep=0cm},
every node/.style={did}}

\pgfplotsset{cycle list={{mark=*}},height=7cm,width=\columnwidth}
\begin{axis}[xlabel=Millions of nodes,
			ylabel=Inverse of performance ratio,
			y label style={did,at={(0.06,0.5)}},
			x tick label style={did, minimum height = 0.5cm},
			x label style={did},
			y tick label style={did, inner sep=3pt}]

\addplot+[mark=*,only marks] coordinates {
(.069011,1777.1459036787)
(.083068,2255.299954894)
(.097824,2482.005460412)
(.120430,3434.0659340659)
(.146253,4520.7956600362)
(.174826,6226.6500622665)
(.210527,8764.2418930763)
(.257896,11961.7224880383)
(.310278,16025.641025641)
(.375322,22321.4285714286)
(.463078,29325.5131964809)
(.557373,37453.1835205992)
(.681358,45454.5454545455)
(.880032,53191.4893617021)
(1.237879,51546.3917525773)
(1.797446,72992.700729927)
};
\end{axis}
\end{tikzpicture}
\caption{growth of performance ratio with respect to the number of nodes.}
\label{fig:actorperformance}
\end{figure}

\paragraph{The Results} In 2014, the most central actor is Michael Madsen, whose career spans 25 years and more than 170 films. Among his most famous appearances, he played as \emph{Jimmy Lennox} in \emph{Thelma \& Louise} (Ridley Scott, 1991), as \emph{Glen Greenwood} in \emph{Free Willy} (Simon Wincer, 1993), as \emph{Bob} in \emph{Sin City} (Frank Miller, Robert Rodriguez, Quentin Tarantino), and as \emph{Deadly Viper Budd} in \emph{Kill Bill} (Quentin Tarantino, 2003-2004). It is worth noting that he played in movies of very different kinds, and consequently he could ``reach'' many actors in a small amount of steps.
The second is Danny Trejo, whose most famous movies are  
\emph{Heat} (Michael Mann, 1995), where he played as \emph{Trejo}, \emph{Machete} (Ethan Maniquis, Robert Rodriguez, 2010) and \emph{Machete Kills} (Robert Rodriguez, 2013), where he played as \emph{Mathete}.
The third ``actor'' is not really an actor: he is the German dictator Adolf Hitler: he was also the most central actor in 2005 and 2010, and he was in the top-10 since 1990. This a consequence of his appearances in several archive footages, that were re-used in several movies (he counts 775 credits, even if most of them are in documentaries or TV-shows, that were eliminated). Among the movies he is credited in, we find \emph{Zelig} (Woody Allen, 1983), and \emph{The Imitation Game} (Morten Tyldum, 2014): obviously, in both movies, he played himself.

Among the other most central actors, we find many people who played a lot of movies, and most of them are quite important actors. However, this ranking does not discriminate between important roles and marginal roles: for instance, the actress Bess Flowers is not widely known, because she rarely played significant roles, but she appeared in over 700 movies in her 41 years career, and for this reason she was the most central for 30 years, between 1950 and 1980. Finally, it is worth noting that we never find Kevin Bacon in the top 10, even if he became famous for the ``Six Degrees of Kevin Bacon'' game \url{http://oracleofbacon.org}, where the player receives an actor $x$, and he has to find a path of length at most $6$ from $x$ to Kevin Bacon in the actor graph. Kevin Bacon was chosen as the goal because he played in several movies, and he was thought to be one of the most central actors: this work shows that, actually, he is quite far from being in the top 10. Indeed, his closeness centrality is $0.336$, while the most central actor, Michael Madsen, has centrality $0.354$, and the 10th actor, Christopher Lee, has centrality $0.350$. We have run again the algorithm with $k=100$, and we have seen that the 100th actor is Rip Torn, with closeness centrality $0.341$.

\bibliographystyle{plain}
\bibliography{library}
\end{document}